\let\orgdescriptionlabel\descriptionlabel
\renewcommand*{\descriptionlabel}[1]{%
  \let\orglabel\label
  \let\label\@gobble
  \phantomsection
  \edef\@currentlabel{#1}%
  \let\label\orglabel
  \orgdescriptionlabel{#1}%
}
\newtheorem{theorem}{Theorem}
\newtheorem{lemma}[theorem]{Lemma}
\newtheorem{notation}[theorem]{Notation}
\newtheorem{definition}[theorem]{Definition}
\newcommand{\axioms}{Axioms (1-8)}
\newcommand{\ts}{tie strength }
\begin{document}

\title{Measuring Tie Strength in Implicit Social Networks}
  
\numberofauthors{2} 

\author{
  \alignauthor
  Mangesh Gupte\titlenote{Current affiliation: Google.}\\
  \affaddr{Department of Computer Science}\\
  \affaddr{Rutgers University}\\
  \affaddr{Piscataway, NJ 08854}\\
  \email{mangesh@cs.rutgers.edu}
  \alignauthor
  Tina Eliassi-Rad\\
  \affaddr{Department of Computer Science}\\
  \affaddr{Rutgers University}\\
  \affaddr{Piscataway, NJ 08854}\\
  \email{tina@eliassi.org}
}

\date{\today}

\maketitle

\begin{abstract}
  Given a set of people and a set of events they attend, we address the problem
  of measuring \emph{connectedness} or \emph{tie strength} between each pair of
  persons given that attendance at mutual events gives an implicit social
  network between people.
  We take an axiomatic approach to this problem. Starting from a list of axioms
  that a measure of tie strength must satisfy, we characterize functions that
  satisfy all the axioms and show that there is a range of measures that satisfy
  this characterization. A measure of tie strength induces a ranking on the
  edges (and on the set of neighbors for every person). We show that for
  applications where the ranking, and not the absolute value of the tie
  strength, is the important thing about the measure, the axioms are equivalent
  to a natural partial order. Also, to settle on a particular measure, we must
  make a non-obvious decision about extending this partial order to a total
  order, and that this decision is best left to particular applications. We
  classify measures found in prior literature according to the axioms that they
  satisfy.
  In our experiments, we measure tie strength and the coverage of our axioms in several datasets. Also, for each
  dataset, we bound the maximum Kendall's Tau divergence (which measures the
  number of pairwise disagreements between two lists) between all measures that
  satisfy the axioms using the partial order. This informs us if particular
  datasets are well behaved where we do not have to worry about which measure to
  choose, or we have to be careful about the exact choice of measure we make.
\end{abstract}

\keywords{Social Networks, Tie Strength, Axiomatic Approach} 

\section{Introduction}
\label{sec:intro}

Explicitly declared friendship links suffer from a low signal to noise ratio
(e.g. Facebook friends or LinkedIn contacts). Links are added for a variety of
reasons like reciprocation, peer-pressure, etc. Detecting which of these links
are important is a challenge.

Social structures are implied by various interactions between users of a
network. We look at event information, where users participate in mutual
events. Our goal is to infer the strength of ties between various users given
this event information. Hence, these social networks are implicit.

There has been a surge of interest in implicit social networks. We can see
anecdotal evidence for this in startups like COLOR (\url{http://www.color.com})
and new features in products like Gmail. COLOR builds an implicit social network
based on people's proximity information while taking
photos.\footnote{\url{http://mashable.com/2011/03/24/color/}} Gmail's
\emph{don't forget bob}~\cite{google_dont_forget_bob} feature uses an implicit
social network to suggest new people to add to an email given a existing list.

People attend different events with each other. In fact, an event is defined by
the set of people that attend it. An event can represent the set of people who
took a photo at the same place and time, like COLOR, or a set of people who are
on an email, like in Gmail. Given the set of events, we would like to infer how
\emph{connected} two people are, i.e. we would like to measure the
\emph{strength of the tie} between people. All that is known about each event is
the list of people who attended it. People attend events based on an implicit
social network with ties between pairs of people. We want to solve the inference
problem of finding this weighted social network that gives rise to the set of
events.

Given a bipartite graph, with people as one set of vertices and events as the
other set, we want to infer the tie-strength between the set of people. Hence,
in our problem, we do not even have access to any directly declared social
network between people, in fact, the social network is implicit. We want to
infer the network based on the set of people who interact together at different
points in time.

We start with a set of axioms and find a characterization of functions that
could serve as a measure of tie strength, just given the event information. We
do not end up with a single function that works best under all circumstances,
and in fact we show that there are non-obvious decisions that need to be made to
settle down on a single measure of tie strength.

Moreover, we examine  the case where the absolute
value of the tie strength is not important, just the order is important (see Section~\ref{sec:partial_order}). We show
that in this case the axioms are equivalent to a natural partial order on the
strength of ties. We also show that choosing a particular tie strength function
is equivalent to choosing a particular linear extension of this partial order.

Our contributions are:
\vspace*{-12pt}
\begin{itemize}
\item \emph{We present an axiomatic approach to the problem of inferring
    implicit social networks by measuring tie strength.}
\item \emph{We characterize functions that satisfy all the axioms and show a
    range of measures that satisfy this characterization.}
\item \emph{We show that in ranking applications, the axioms are equivalent to a
    natural partial order; we demonstrate that to settle on a particular
    measure, we must make non-obvious decisions about extending this partial
    order to a total order which is best left to the particular application.}
\item \emph{We classify measures found in prior literature according to the
    axioms that they satisfy.}
\item \emph{In our experiments, we show that by using Kendall's Tau divergence,
    we can judge whether a dataset is well-behaved, where we do not have to
    worry about which tie-strength measure to choose, or we have to be careful
    about the exact choice of measure.}
\end{itemize}
\vspace*{-6pt}

The remainder of this paper is structured as follows.
Section~\ref{sec:related} outlines the related work.  
Section~\ref{sec:model} presents our proposed model.
Sections~\ref{sec:axioms} and~\ref{sec:measures} describe 
the axioms and measures of tie strength, respectively.  
Section~\ref{sec:experiments} presents our experiments.  
Section~\ref{sec:conclusions} concludes the paper.

\section{Related Work}
\label{sec:related}
\citep*{gran} introduced the notion of strength of ties in social networks and
since then has affected different areas of study. We split the related works
into different subsections that emphasize particular methods/applications.

{\bf Strength of Ties: } \citep*{gran} showed that weak ties are important for
various aspects like spread of information in social networks. There have been
various studies on identifying the strength of ties given different features of
a graph. \citep*{predictingtiestrength} model tie strength as a linear
combination of node attributes like intensity, intimacy, etc to classify ties in
a social network as strong or weak. The weights on each attribute enable them to
find attributes that are most useful in making these
predictions. \citep*{transactional_tie_strength} take a supervised learning
approach to the problem by constructing a predictor that determines whether a
link in a social network is a strong tie or a weak tie. They report that
\emph{network transactional features}, which combine network structure with
transactional features like the number of wall posting, photos, etc like
$\frac{|posts(i,j)|}{\Sigma_{k}|posts_(j,k)|}$, form the best predictors.

{\bf Link Prediction: } \citep*{adamic_score} considers the problem of
predicting links between web-pages of individuals, using information such as
membership of mailing lists and use of common phrases on web pages. They define
a measure of similarity between users by creating a bipartite graph of users on
the left and features (e.g., phrases and mailing-lists) on the right as $w(u,v)
= \sum_{(i \text{ neighbor of } u \& v)}\reci{\log|i|}$.  \citep*{link_prediction}
formalizes the problem of predicting which new interactions will occur in a
social network given a snapshot of the current state of the network. It uses
many existing predictors of similarity between nodes like \citep*{adamic_score,
  simrank, katz} and generates a ranking of pairs of nodes that are currently not
connected by an edge. It compares across different datasets to measure the
efficacy of these measures. Its main finding is that there is enough information
in the network structure that all the predictors handily beat the random
predictor, but not enough that the absolute number of predictions is high.
\citep*{bipartite_internal} addresses the problem of predicting links in a
bipartite network. They define \emph{internal links} as links between left nodes
that have a right node in common, i.e. they are at a distance two from each
other and the predictions that are offered are only for internal links.

{\bf Email networks: } Because of the ubiquitous nature of email, there has been
a lot of work on various aspects of email
networks. \citep*{google_dont_forget_bob} discusses a way to suggest more
recipients for an email given the sender and the current set of recipients. This
feature has been integrated in the Google's popular Gmail
service. \citep*{transactional_tie_strength} constructs a regression model for
classifying edges in a social network as strong or weak. They achieve high
accuracy and find that \emph{network-transactional} features like number of
posts from $u$ to $v$ normalized by the total number of posts by $u$ achieve the
largest gain in accuracy of prediction.

{\bf Axiomatic approach to Similarity:} 
\citep*{pagerank_axioms} were one of the first to axiomatize graph measures.  In particular, they studied axiomatizing PageRank.  The closest in spirit to our work is the
work by Lin~\citep*{lin_info-theory-def-sim} that defines an information
theoretic measure of similarity. This measure depends on the existence of a
probability distribution on the features that define objects. While the measure
of tie strength between people is similar to a measure of similarity, there are
important differences. We do not have any probability distribution over events,
just a log of the ones that occurred. More importantly,
\citep*{lin_info-theory-def-sim} defines items by the attributes or features
they have. Hence, items with the same features are identical. In our case, even
if two people attend all the same events, they are not the same person, and in
fact they might not even have very high tie strength depending on how large the
events were. 

\section{Model}
\label{sec:model}

We model people and events as nodes and use a bipartite graph $G = (L \cup R, E)$
where the edges represent membership. The left vertices correspond to people
while the right vertices correspond to events. We ignore any information other
than the set of people who attended the events, like the timing, location,
importance of events. These are features that would be important to the overall
goal of measuring tie strength between users, but in this work we focus on the
task of inferring \ts using the graph structure only. We shall denote users in
$L$ by small letters ($u,v,\ldots$) and events in $R$ by capital
letters($P,Q,\ldots$). There is an edge between $u$ and $P$ if and only if $u$
attended event $P$. Hence, our problem is to find a function on bipartite graphs
that models \emph{tie strength} between people, given this bipartite graph
representation of events.

\begin{figure}[tbh]
  \centering
  \includegraphics[scale=0.42]{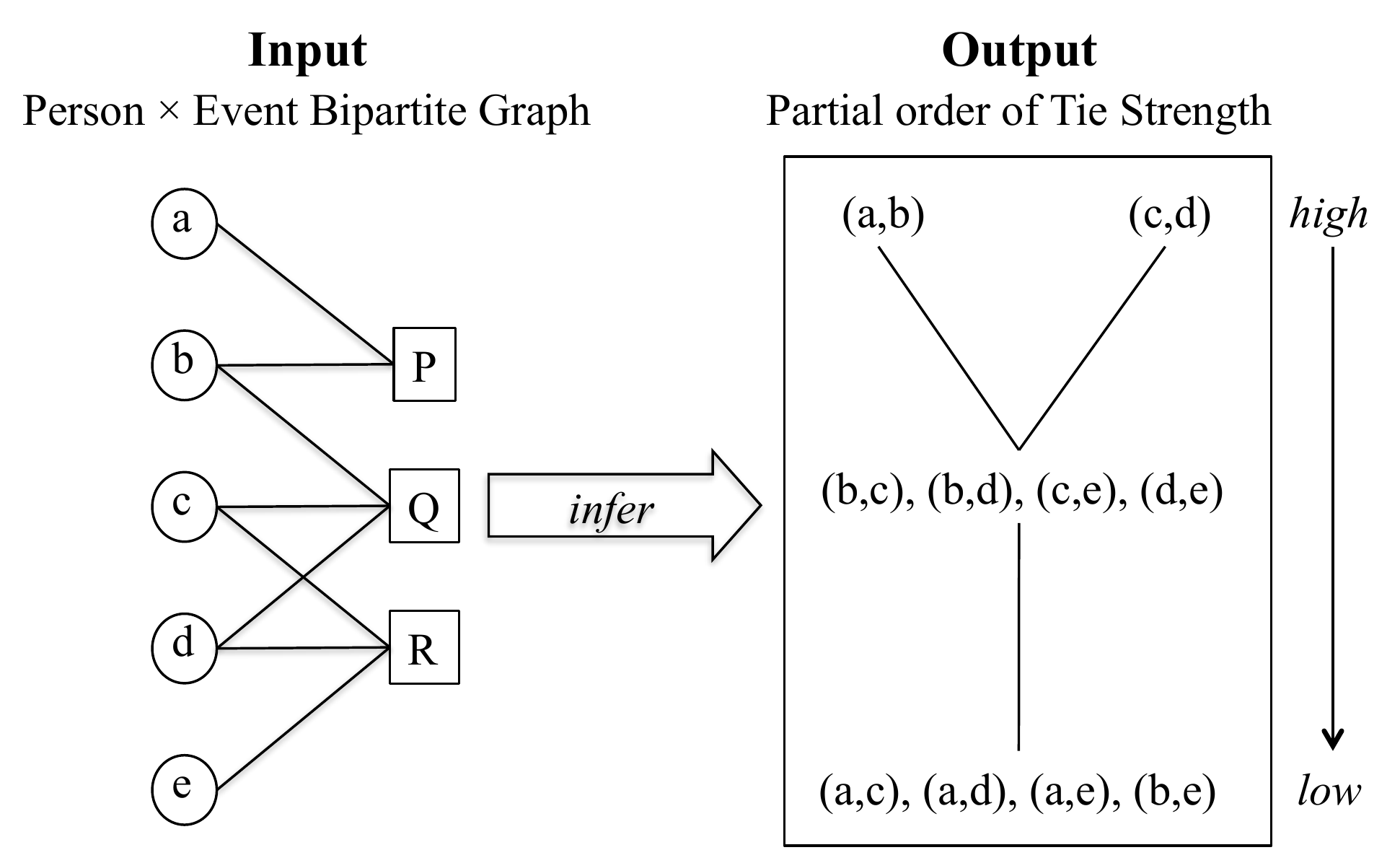}
  \caption{Given a bipartite person $\times$ event graph, we want to infer the induced partial order
    of tie strength among the people.}
  \label{fig:example}
\end{figure}

We also introduce some notation. We shall denote the tie strength of $u$ and $v$
due to a graph $G$ as $TS_G(u,v)$ or as $TS(u,v)$ if $G$ is obvious from
context. We shall also use $TS_{\{E_1,\ldots,E_k\}}(u,v)$ to denote the tie
strength between $u$ and $v$ in the graph induced by events
${\{E_1,\ldots,E_k\}}$ and users that attend at least one of these events. For a
single event $E$, then $TS_E(u,v)$ denotes the \ts between $u$ and $v$ if $E$
where the only event.

We denote the set of natural numbers by $\natnum$. A sequence of $k$ natural
numbers is given by $(a_1, \ldots, a_k)$ and the set of all such sequences is
$\natnum^k$. The set of all finite sequence of natural numbers is represented as
$\natnum^* = \cup_k \natnum^k$

\section{Axioms of Tie Strength}
\label{sec:axioms}
We now discuss the axioms that measures of tie strength between two users $u$
and $v$ must follow.

\begin{description}
\item[Axiom 1 (Isomorphism)\label{axiom:iso}] Suppose we have two graphs $G$ and
  $H$ and a mapping of vertices such that $G$ and $H$ are isomorphic. Let vertex
  $u$ of $G$ map to vertex $a$ of $H$ and vertex $v$ to $b$. Then $TS_G(u,v) =
  TS_H(a,b)$. Hence, the tie strength between $u$ and $v$ does not depend on the
  labels of $u$ and $v$, only on the link structure.  
\item[Axiom 2 (Baseline)\label{axiom:baseline}] If there are no events, then the
  tie strength between each pair $u$ and $v$ is 0. $TS_\phi(u,v) = 0$. If there
  are only two people $u$ and $v$ and a single party which they attend, then
  their tie strength is $1$. $TS_{\{u,v\}}(u,v) = 1$.
\item[Axiom 3 (Frequency: More events create stronger ties)\label{axiom:more_stronger}] All
  other things being equal, the more events common to $u$ and $v$, the stronger
  the tie strength of $u$ and $v$. \\Given a graph $G = (L\cup R, E)$ and two
  vertices $u,v \in L$. Consider the graph $G' = (L\cup (R\cup P), E \cup
  P_{u,v,\ldots})$, where $P_{u,v,\ldots}$ is a new event which both $u$ and $v$
  attend. Then the $TS_{G'}(u,v) \geq TS_G(u,v)$.
\item[Axiom 4 (Intimacy: Smaller events create stronger
  ties) \label{axiom:smaller_stronger}] All other things being equal, the fewer
  invitees there are to any particular party attended by $u$ and $v$, the
  stronger the \ts between $u$ and $v$. \\ Given a graph $G = (L\cup R, E)$ such
  that $P \in R$ and $(P,u),(P,v), (P,w) \in E$ for some vertex $w$. Consider
  the graph $G' = (L\cup R), E-(P,w))$, where the edge $(P,w)$ is deleted. Then
  the $TS_{G}(u,v) \geq TS_{G'}(u,v)$.
\item[Axiom 5 (Larger events create more ties)\label{axiom:larger_more}]
  Consider two events $P$ and $Q$. If the number of people attending $P$ is
  larger than the number of people attending $Q$, then the total \ts created by
  event $P$ is more than that created by event $Q$. \\$|P| \geq |Q| \implies
  \sum_{u,v \in P}TS_P(u,v) \geq \sum_{u,v \in Q} TS_Q(u,v)$.
\item[Axiom 6 (Conditional Independence of Vertices)\label{axiom:depend}] The tie
  strength of a vertex $u$ to other vertices does not depend on events that $u$
  does not attend; it only depends on events that $u$ attends.
\item[Axiom 7 (Conditional Independence of Events)\label{axiom:indep}] The
  increase in tie strength between $u$ and $v$ due to an event $P$ does not
  depend other events, just on the existing tie strength between $u$ and
  $v$. \\$TS_{G+P}(u,v) = g(TS_G(u,v), TS_P(u,v))$ for some fixed function
  monotonically increasing function $g$.
\item[Axiom 8 (Submodularity)\label{axiom:submodularity}] The marginal increase
  in tie strength of $u$ and $v$ due to an event $Q$ is at most the tie strength between $u$ and $v$ if $Q$ was their only event. 
  \\If $G$ is a graph and $Q$ is a single
  event, $TS_G(u,v) + TS_Q(u,v) \geq TS_{G+Q}(u,v)$.
\end{description}

\subsection*{Discussion}
These axioms give a measure of tie strength between nodes that is positive but
unbounded. Nodes that have a higher value are closer to each other than nodes
that have lower value. 

We get a sense of the axioms by applying them to
Figure~\ref{fig:example}. \ref{axiom:iso} implies that $TS(b,c) = TS(b,d)$ and
$TS(c,e) = TS(d,e)$. \ref{axiom:baseline}, \ref{axiom:depend} and
\ref{axiom:indep} imply that $TS(a,c) = TS(a,d) = TS(a,e) = TS(b,e) =
0$. \ref{axiom:smaller_stronger} implies that $TS(a,b) \geq TS(d,e)$.
\ref{axiom:more_stronger} implies that $TS(c,d) \geq TS(d,e)$.

While each of the axioms above are fairly intuitive, they are hardly trivial. In
fact, we shall see that various measures used in prior literature break some of
these axioms. On the other hand, it might seem that satisfying all the axioms is
a fairly strict condition. However, we shall see that even satisfying all the
axioms are not sufficient to uniquely identify a measure of tie strength. The axioms
leave considerable space for different measures of tie strength.

One reason the axioms do not define a particular function is that there is
inherent tension between \ref{axiom:smaller_stronger} and
\ref{axiom:more_stronger}. While both state ways in which \ts becomes
stronger, the axioms do not resolve which one dominates the other or how they
interact with each other. This is a non-obvious decision that we feel is best
left to the application in question. In Figure~\ref{fig:example}, we cannot tell
using just \axioms\; which of $TS(a,b)$ and $TS(c,d)$ is larger. We discuss this
more more in Section~\ref{sec:order}.

\subsection{Characterizing Tie Strength}
\label{sec:char}
In this section, we shall state and prove Theorem~\ref{thm:char} that gives a
characterization of all functions that satisfy the axioms of tie
strength. \axioms\; do not uniquely define a function, and in fact, one of the
reasons that tie strength is not uniquely defined up to the given axioms is that
we do not have any notion for comparing the relative importance of number of
events (frequency) versus the exclusivity of events (intimacy). For example, in terms of the partial order, it is
not clear whether $u$ and $v$ having in common two events with two people
attending them is better than or worse than $u$ and $v$ having three events in
common with three people attending them.

We shall use the following definition for deciding how much total tie strength a
single event generates, given the size of the event.

\begin{notation} If there is a single event, with $k$ people, we shall denote
  the total tie-strength generated as $f(k)$.
\end{notation}

\begin{lemma}[Local Neighborhood] \label{lem:local_neighborhood} The \ts of $u$
  and $v$ is affected only by events that both $u$ and $v$ attend.
\end{lemma}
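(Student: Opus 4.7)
The plan is to derive the lemma as a two-step reduction using \ref{axiom:depend} (Conditional Independence of Vertices), together with the symmetry of tie strength in its arguments. Let $G$ be any bipartite graph, partition its event set $R$ into four classes: $R_{uv}$ (events both $u$ and $v$ attend), $R_u$ (events only $u$ attends), $R_v$ (events only $v$ attends), and $R_{\emptyset}$ (events neither attends). The goal is to show that $TS_G(u,v)$ depends only on $R_{uv}$.

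First, I would apply \ref{axiom:depend} to $u$: since every event in $R_v \cup R_{\emptyset}$ is an event that $u$ does not attend, removing these events does not change $TS(u,\cdot)$, and in particular does not change $TS(u,v)$. Formally, letting $G_1$ be the graph obtained from $G$ by deleting all right-vertices in $R_v \cup R_{\emptyset}$ (together with their incident edges), we get $TS_{G_1}(u,v) = TS_G(u,v)$. The event set of $G_1$ is precisely $R_{uv} \cup R_u$.

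Next, I would establish that tie strength is symmetric, i.e.\ $TS_G(u,v) = TS_G(v,u)$. This follows from \ref{axiom:iso} applied to the identity map from $G$ to itself, interpreted as the isomorphism swapping the roles of $u$ and $v$ in the pair (the underlying graph is unchanged and $u\leftrightarrow v$ is an automorphism on the pair). With symmetry in hand, I can apply \ref{axiom:depend} a second time, now to $v$, to delete every event in $R_u$ from $G_1$: these are events that $v$ does not attend, so removing them leaves $TS(v,u)$ unchanged, hence $TS(u,v)$ unchanged as well. Let $G_2$ denote the resulting graph, whose event set is exactly $R_{uv}$.

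Combining the two reductions gives $TS_G(u,v) = TS_{G_1}(u,v) = TS_{G_2}(u,v)$, and $G_2$ is determined entirely by the events both $u$ and $v$ attend (together with the people attending them, which by \ref{axiom:iso} may be taken as the full left vertex set). The only subtlety to handle carefully is the symmetry step; I do not expect a genuine obstacle here, since \ref{axiom:iso} applied to an automorphism of the bipartite graph that swaps $u$ and $v$ (or, more simply, the observation that the axioms make no distinction between the two arguments of $TS$) delivers it immediately. The rest is a bookkeeping application of \ref{axiom:depend} twice.
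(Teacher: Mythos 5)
Your proof is correct and follows essentially the same route as the paper's: two successive applications of \ref{axiom:depend}, first deleting the events $u$ does not attend, then deleting from the result the events $v$ does not attend, leaving exactly the common events. The only difference is your symmetry detour, which is unnecessary (and as stated a bit shaky, since the identity map is not an automorphism swapping $u$ and $v$): \ref{axiom:depend} speaks of the tie strength of \emph{a} vertex to other vertices, so it applies to $v$ directly, exactly as the paper does.
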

\begin{proof}
  Given a graph $G$ and users $u$ and $v$ in $G$, $G^{-u}$ is obtained by
  deleting all events that $u$ is not a part of. Similarly, $G^{-u,v}$ is
  obtained by deleting all events of $G^{-u}$ that $v$ is not a part of. By
  \ref{axiom:depend}, tie strength of $u$ only depends on events that $u$
  attends. Hence, $TS_G(u,v) = TS_{G^{-u}}(u,v)$. Also, tie strength of $v$ only
  depends on events that $v$ attends. Hence, $TS_G(u,v) = TS_{G^{-u}}(u,v) =
  TS_{G^{-u,v}}(u,v)$. This proves our claim.
\end{proof}

\begin{lemma} The tie strength between any two people is always non-negative and
  is equal to zero if they have never attended an event together.
\end{lemma}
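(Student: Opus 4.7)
The plan is to reduce both claims to the previously established Local Neighborhood lemma and then invoke Axioms 2 and 3. The key observation is that, by Lemma~\ref{lem:local_neighborhood}, we always have $TS_G(u,v) = TS_{G^{-u,v}}(u,v)$, where $G^{-u,v}$ contains only the events attended by both $u$ and $v$. So it suffices to analyze tie strength in a graph whose every event includes both $u$ and $v$.

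For the second claim, suppose $u$ and $v$ have never attended an event together. Then by construction $G^{-u,v}$ contains no events at all (every event of $G$ is missing at least one of $u,v$ and was therefore deleted in forming $G^{-u,v}$). The first half of Axiom~\ref{axiom:baseline} gives $TS_{\emptyset}(u,v) = 0$, so $TS_G(u,v) = TS_{G^{-u,v}}(u,v) = 0$.

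For non-negativity, I would enumerate the common events of $u$ and $v$ in $G^{-u,v}$ as $P_1, P_2, \ldots, P_m$ and build up $G^{-u,v}$ one event at a time, starting from the empty graph $G_0 = \emptyset$ and letting $G_i = G_{i-1} + P_i$. By Axiom~\ref{axiom:baseline} we have $TS_{G_0}(u,v) = 0$. Each $P_i$ is by construction an event that both $u$ and $v$ attend, so Axiom~\ref{axiom:more_stronger} (Frequency) gives $TS_{G_i}(u,v) \geq TS_{G_{i-1}}(u,v)$. Chaining these inequalities yields $TS_G(u,v) = TS_{G^{-u,v}}(u,v) = TS_{G_m}(u,v) \geq 0$.

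There is no real obstacle here; the only subtle point is that the reduction via Lemma~\ref{lem:local_neighborhood} is essential, because Axiom~\ref{axiom:more_stronger} on its own only lets us monotonically add common events, while the empty graph base case of Axiom~\ref{axiom:baseline} is only directly applicable when no events remain. Combining the two via $G^{-u,v}$ handles both halves of the lemma simultaneously.
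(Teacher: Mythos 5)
Your proposal is correct and takes essentially the same route as the paper: Lemma~\ref{lem:local_neighborhood} plus the empty-graph case of \ref{axiom:baseline} for the ``never attended together'' claim, and \ref{axiom:baseline} plus \ref{axiom:more_stronger} for non-negativity. Your explicit event-by-event induction through $G^{-u,v}$ just spells out carefully what the paper compresses into the single line $TS_G(u,v) \geq TS_\phi(u,v) = 0$, and rightly notes that the reduction to common events is needed before \ref{axiom:more_stronger} can be applied.
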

\begin{proof} If two people have never attended an event together, then from
  Lemma~\ref{lem:local_neighborhood} the tie strength remains unchanged if we
  delete all the events not containing either which in this case is all the
  events. Then \ref{axiom:baseline} tells us that $TS(u,v) = 0$.

  Also, \ref{axiom:more_stronger} implies that $TS_G(u,v) \geq TS_\phi(u,v) =
  0$. Hence, the tie strength is always non-negative.
\end{proof}

\begin{lemma} If there is a single party, with $k$ people, the Tie Strength of
  each tie is equal to $\frac{f(k)}{{k\choose 2}}$.
\end{lemma}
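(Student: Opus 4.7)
The plan is a short symmetry argument using Axiom~1 (Isomorphism) together with the definition of $f(k)$. First I would fix a single event $P$ with participants $\{u_1, \ldots, u_k\}$, and let $G$ be the bipartite graph consisting only of this event and its attendees. For any two pairs $(u_i, u_j)$ and $(u_{i'}, u_{j'})$ of distinct attendees, I would exhibit a graph automorphism of $G$ that swaps $u_i \leftrightarrow u_{i'}$ and $u_j \leftrightarrow u_{j'}$ (and fixes or permutes the remaining attendees arbitrarily); such an automorphism exists because every attendee of $P$ plays an indistinguishable role in $G$. By \ref{axiom:iso}, this forces $TS_G(u_i, u_j) = TS_G(u_{i'}, u_{j'})$, so all $\binom{k}{2}$ pairwise tie strengths in $G$ are equal to a common value $t$.

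Next I would invoke the defining notation for $f$: the total tie strength generated by a single event on $k$ people is $f(k)$, i.e.\ $\sum_{u,v \in P} TS_P(u,v) = f(k)$. Since each of the $\binom{k}{2}$ pairs contributes the same value $t$ to this sum, I get $\binom{k}{2} \cdot t = f(k)$, and hence $t = f(k)/\binom{k}{2}$, which is exactly the claim.

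There is really no obstacle here; the only thing worth being careful about is making the isomorphism step precise (explicitly writing the permutation of attendees and checking that it extends to a bipartite graph automorphism fixing $P$), and noting that we are working in the graph consisting of just the single event so that Lemma~\ref{lem:local_neighborhood} is not even needed. The result then follows in one line from the symmetry conclusion and the definition of $f(k)$.
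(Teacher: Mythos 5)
Your proposal is correct and follows essentially the same route as the paper: the paper's proof also deduces from \ref{axiom:iso} that all ties at a single event have equal strength and then divides the total $f(k)$ by the ${k \choose 2}$ edges. Your version merely spells out the automorphism argument that the paper leaves implicit, which is a fine (and slightly more rigorous) elaboration rather than a different approach.
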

\begin{proof} By \ref{axiom:iso}, it follows that the tie-strength on each tie is
  the same. Since the sum of all the ties is equal to $f(k)$, and there are ${k
    \choose 2}$ edges, the tie-strength of each edge is equal to $\frac{f(k)}{{k
      \choose 2}}$.
\end{proof}

\begin{lemma} 
  \label{lemma:ts_bound}
  The total tie strength created at an event $E$ with $k$ people is a monotone
  function $f(k)$ that is bounded by $1 \leq f(k) \leq {k \choose 2}$
\end{lemma}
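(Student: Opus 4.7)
The plan is to prove the three claims (monotonicity, lower bound, upper bound) separately, in each case pulling the statement back to either a direct axiom or to the base case of a two-person event.

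First, I would establish monotonicity. Given events $P$ and $Q$ with $|P| \geq |Q|$, applying Axiom~5 directly says $\sum_{u,v\in P} TS_P(u,v) \geq \sum_{u,v\in Q} TS_Q(u,v)$, which is exactly $f(|P|) \geq f(|Q|)$. So $f$ is monotone nondecreasing.

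Next, the lower bound $f(k) \geq 1$ for $k \geq 2$. By Axiom~2, a graph consisting of a single event with exactly two people $u,v$ satisfies $TS(u,v) = 1$, so $f(2) = 1$. Monotonicity then propagates this to $f(k) \geq 1$ for all $k \geq 2$.

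The upper bound $f(k) \leq \binom{k}{2}$ is the step requiring a little thought, and I expect it to be the main (though still mild) obstacle. The idea is a peeling argument using Axiom~4 (Intimacy). Fix any pair $u,v$ attending a single event $P$ of size $k$. Deleting a third attendee $w$ from $P$ yields a new graph $P'$ with $TS_{P'}(u,v) \geq TS_P(u,v)$. Iterating this deletion $k-2$ times removes every attendee except $u$ and $v$, producing the two-person single-event graph, whose pairwise tie strength is $1$ by Axiom~2. Hence $TS_P(u,v) \leq 1$ for every pair $u,v \in P$. Summing over all $\binom{k}{2}$ pairs in $P$ gives
\[
f(k) \;=\; \sum_{u,v\in P} TS_P(u,v) \;\leq\; \binom{k}{2},
\]
completing the proof.

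The main subtlety is to notice that Axiom~4, stated for a single deletion, can be applied repeatedly so that the bound on $TS_P(u,v)$ reduces to the baseline two-person case from Axiom~2; no use of submodularity (Axiom~8) or of the conditional-independence axioms is needed here, though they will presumably reappear when the full characterization theorem is proved.
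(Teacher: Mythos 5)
Your proof is correct and follows essentially the same route as the paper's: Axiom~5 for monotonicity, the Baseline axiom for the lower bound, and Axiom~4 plus Baseline to bound each $TS_E(u,v)$ by $1$ before summing over the $\binom{k}{2}$ pairs. If anything, your version is slightly more careful --- you make the repeated application of Axiom~4 explicit and anchor the lower bound at $f(2)=1$, whereas the paper writes $f(1)=1$ (almost certainly a typo for $f(2)$, since a one-person event creates no ties).
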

\begin{proof}
  By \ref{axiom:smaller_stronger}, the tie strength of $u$ and $v$ due to $E$ is
  less than that if they were alone at the event. $TS_E(u,v) \leq TS_{u,v}(u,v)
  = 1$, by the Baseline axiom. Summing up over all ties gives us that
  $\sum_{u,v} TS_E(u,v) \leq {k \choose 2}$. Also, since larger events generate
  more ties, $f(k) \geq f(i) : \forall i < k$. In particular, $f(k) \geq f(1) =
  1$. This proves the result.
\end{proof}

We are now ready to state the main theorem in this section.
\begin{theorem} 
  \label{thm:char}
  Given a graph $G=(L \cup R, E)$ and two vertices $u,v$, if the tie-strength
  function $TS$ follows \axioms, then the function has to be of the form
  $$TS_G(u,v) = g(h(|P_1|), h(|P_2|), \ldots ,h(|P_k|) )$$ 
  where $\{P_i\}_{1\leq i\leq k}$ are the events common to both $u$ and $v$,
  $h:\natnum \to \real$ is a monotonically decreasing function bounded by $1
  \geq h(n) \geq \reci{{n \choose 2}}$ and $g:\natnum^* \to \real$ is a
  monotonically increasing submodular function.
\end{theorem}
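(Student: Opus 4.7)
My plan is to peel off the axioms one at a time, reducing the general tie-strength function to a composition of a per-event contribution $h$ and a cross-event aggregator $g$.

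First, I would use Lemma~\ref{lem:local_neighborhood} (which already encapsulates Axiom~\ref{axiom:depend}) to restrict attention to the events $P_1,\ldots,P_k$ common to both $u$ and $v$, so $TS_G(u,v)$ can only depend on these events. Next, for the single-event case, Axiom~\ref{axiom:iso} forces $TS_{P}(u,v)$ to depend only on $|P|$; combined with the preceding lemma that gave $TS_P(u,v) = f(|P|)/\binom{|P|}{2}$, this lets me \emph{define} $h(n) := f(n)/\binom{n}{2}$, so that the single-event contribution is exactly $h(|P|)$. The bound $1/\binom{n}{2} \le h(n) \le 1$ then comes directly from Lemma~\ref{lemma:ts_bound}, and monotone decrease of $h$ follows from Axiom~\ref{axiom:smaller_stronger} applied to a single event: inserting an additional attendee can only weaken an existing edge's tie, so $h(n+1) \le h(n)$.

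Second, I would build $g$ inductively on the number of common events. Axiom~\ref{axiom:indep} gives a single two-argument update rule $\tilde{g}$ such that
\[
TS_{G+P}(u,v) = \tilde{g}\bigl(TS_G(u,v),\, TS_P(u,v)\bigr) = \tilde{g}\bigl(TS_G(u,v),\, h(|P|)\bigr).
\]
Iterating this starting from the empty graph (whose $TS$ is $0$ by Axiom~\ref{axiom:baseline}), after processing $P_1,\ldots,P_k$ I obtain $TS_G(u,v)$ as an iterated composition of $\tilde g$ applied to $h(|P_1|),\ldots,h(|P_k|)$. To package this as a single function $g$ on sequences (or rather multisets) of event sizes, I must show the outcome is independent of the order in which the events are processed: this is where I would invoke Axiom~\ref{axiom:iso}, since permuting the events $P_1,\ldots,P_k$ yields an isomorphic bipartite graph and hence the same tie strength. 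Thus $g(h(|P_1|),\ldots,h(|P_k|))$ is well defined on $\natnum^\ast$.

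Finally, the two structural properties of $g$ follow directly from the remaining axioms. Monotonicity of $g$ is Axiom~\ref{axiom:more_stronger}: adding a new common event $P$ only increases tie strength, so appending an extra argument $h(|P|)$ cannot decrease $g$; combined with the monotonicity of $\tilde g$ in its second argument (from Axiom~\ref{axiom:indep}), $g$ is monotone in each coordinate. Submodularity is Axiom~\ref{axiom:submodularity}: for any graph $G$ and new event $Q$,
\[
g\bigl(h(|P_1|),\ldots,h(|P_k|),h(|Q|)\bigr) - g\bigl(h(|P_1|),\ldots,h(|P_k|)\bigr) \;\le\; h(|Q|),
\]
which, using $g(\emptyset)=0$ from Axiom~\ref{axiom:baseline}, is the diminishing-returns condition sought.

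The step I expect to require the most care is the middle one: arguing that the iterated application of $\tilde g$ from Axiom~\ref{axiom:indep} genuinely collapses to a function of the \emph{multiset} of event sizes. Axiom~\ref{axiom:indep} on its own is silent about order, and it is the interplay with Axiom~\ref{axiom:iso} (through relabeling of the events on the right side of the bipartite graph) that rescues well-definedness. Everything else is a direct read-off of one axiom apiece.
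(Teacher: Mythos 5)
Your proposal is correct and takes essentially the same route as the paper's own proof: restrict attention to common events via Lemma~\ref{lem:local_neighborhood}, define the per-event contribution $h(n) = f(n)/\binom{n}{2}$ with its bounds coming from Lemma~\ref{lemma:ts_bound} and its monotone decrease from Axiom~4, and obtain the aggregator $g$ from Axiom~7, with monotonicity from Axiom~3 and submodularity from Axiom~8. If anything, you are more careful than the paper, which asserts in a single sentence that Axiom~7 yields a $k$-ary monotone submodular $g$, whereas you explicitly construct $g$ by iterating the binary update rule and invoke Axiom~1 to show the result is independent of the order in which events are processed.
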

\begin{proof}
  Given two users $u$ and $v$ we use \axioms\; to successively change the form
  of $TS_G(u,v)$. Let $\{P_i\}_{1\leq i\leq k}$ be all the events common to $u$
  and $v$. \ref{axiom:indep} implies that $TS_G(u,v) = g(TS_{P_i}(u,v))_{1\leq i
    \leq k}$, where $g$ is a monotonically increasing submodular function. Given
  an event $P$, $TS_P(u,v) = h(|P|) = \frac{f(|P|)}{ {|P| \choose 2}}$. By
  \ref{axiom:smaller_stronger}, $h$ is a monotonically decreasing function. Also,
  by Lemma~\ref{lemma:ts_bound}, $f$ is bounded by $1 \leq f(n) \leq {n \choose
    2}$. Hence, $h$ it bounded by $1 \geq h(n) \geq \reci{{n \choose 2}}$. This
  completes the proof of the theorem.
\end{proof}

Theorem~\ref{thm:char} gives us a way to explore the space of valid functions
for representing tie strength and find which work given particular
applications. In Section~\ref{sec:measures} we shall look at popular measure of
tie strength and show that most of them follow \axioms\; and hence are of the
form described by Theorem~\ref{thm:char}. We also describe the functions $h$ and
$g$ that characterize these common measures of \ts. While Theorem~\ref{thm:char}
gives a characterization of functions suitable for describing tie strength, they
leave open a wide variety of functions. In particular, it does not give the
comfort of having a single function that we could use. We discuss the reasons
for this and what we would need to do to settle upon a particular function in
the next section.

\subsection{Tie Strength and Orderings}
\label{sec:order}
We begin this section with a definition of order in a set.

\begin{definition}[Total Order] Given a set $S$ and a binary relation $\leq_\Oo$
  on $S$, $\Oo = (S, \leq_\Oo)$ is called a total order if and only if it
  satisfies the following properties
  \begin{inparaenum}[(i]
  \item Total). for every $u,v \in S$, $u \leq_\Oo v$ or $v\leq_\Oo u$
  \item Anti-Symmetric). $u \leq_\Oo v \text{ and } v \leq_\Oo u \implies
    u=v$
  \item Transitive). $u \leq_\Oo v$ and $v \leq_\Oo w \implies u\leq_\Oo w$
  \end{inparaenum}

  A total order is also called a linear order.
\end{definition}

Consider a measure $TS$ that assigns a measure of tie strength to each pair of
nodes $u,v$ given the events that all nodes attend in the form of a graph
$G$. Since $TS$ assigns a real number to each edge and the set of reals is
totally ordered, $TS$ gives a total order on all the edges. In fact, the
function $TS$ actually gives a total ordering of $\natnum^*$. In particular, if we fix
a vertex $u$, then $TS$ induces a total order on the set of neighbors of $u$,
given by the increasing values of $TS$ on the corresponding edges.

\subsubsection{The Partial Order on $\natnum^*$}
\label{sec:partial_order}

\begin{definition}[Partial Order] Given a set $S$ and a binary relation
  $\leq_\Po$ on $S$, $\Po = (S, \leq_\Po)$ is called a partial order if and only
  if it satisfies the following properties
  \begin{inparaenum}[(i]
  \item Reflexive). for every $u \in S, u \leq_\Po u$
  \item Anti-Symmetric). $u \leq_\Po v \text{ and } v \leq_\Po u \implies
    u=v$
  \item Transitive). $u \leq_\Po v$ and $v \leq_\Po w \implies u\leq_\Po w$
  \end{inparaenum}

  The set $S$ is called a partially ordered set or a poset.
\end{definition}

Note the difference from a total order is that in a partial order not every pair
of elements is comparable. We shall now look at a natural partial order $\N =
(\natnum^*, \leq_\N)$ on the set $\natnum^*$ of all finite sequences of natural
numbers. Recall that $\natnum^* = \cup_k \natnum^k$. We shall think of this
sequence as the number of common events that a pair of users attend.

\begin{definition}[Partial order on $N^*$] 
  Let $a,b \in \natnum^*$ where $a = (a_i)_{1\leq i \leq A}$ and $b =
  (b_i)_{1\leq i \leq B}$. We say that $a \geq_\N b$ if and only if $A \geq B$
  and $a_i \leq b_i: 1 \leq i \leq B$. This gives the partial order $\N =
  (\natnum^*, \leq_\N)$.
\end{definition}

The partial order $\N$ corresponds to the intuition that more events and smaller
events create stronger ties. In fact, we claim that this is exactly the partial
order implied by the \axioms. Theorem~\ref{thm:ts_partial_order} formalizes this
intuition along with giving the proof. What we would really like is a total
ordering. Can we go from the partial ordering given by the \axioms\; to a total
order on $\natnum^*$? Theorem~\ref{thm:ts_partial_order} also suggest ways in
which we can do this.

\subsubsection{Partial Orderings and Linear Extensions}
\label{sec:linear_ext}

In this section, we connect the definitions of partial order and the functions
of tie strength that we are studying. First we start with a definition.

\begin{definition}[Linear Extension] $\Lo = (S, \leq_\Lo)$ is called the
  \emph{linear extension} of a given partial order $\Po = (S, \leq_\Po)$ if and
  only if $\Lo$ is a total order and $\Lo$ is consistent with the ordering
  defined by $\Po$, that is, for all $u,v \in S$, $u \leq_\Po v \implies u
  \leq_\Lo v$.
\end{definition}

We are now ready to state the main theorem which characterizes functions that
satisfy \axioms\; in terms of a partial ordering on $\natnum^*$. Fix nodes $u$
and $v$ and let $P_1,\ldots,P_n$ be all the events that both $u$ and $v$
attend. Consider the sequence of numbers $(|P_i|)_{1 \leq i \leq k}$ that give
the number of people in each of these events. Without loss of generality assume
that these are sorted in ascending order. Hence $|P_i| \leq |P_{i+1}|$. We
associate this \emph{sorted sequence} of numbers with the tie $(u,v)$. The
partial order $\N$ induces a partial order on the set of pairs via this
mapping. We also call this partial order $\N$. Fixing any particular measure of
tie strength, gives a mapping of $\natnum^*$ to $\real$ and hence implies
fixing a particular linear extension of $\N$, and fixing a linear extension of
$\N$ involves making non-obvious decisions between elements of the partial
order. We formalize this in the next theorem.

\begin{theorem}
  \label{thm:ts_partial_order}
  Let $G = (L \cup R, E)$ be a bipartite graph of users and events. Given two
  users $(u,v) \in (L \times L)$, let $(|P_i|)_{1 \leq i \leq k} \in R$ be the
  set of events common to users $(u,v)$. Through this association, the partial
  order $\N = (\natnum^*, \leq_\N)$ on finite sequences of numbers induces a
  partial order on $L \times L$ which we also call $\N$.

  Let $TS$ be a function that satisfies \axioms. Then $TS$ induces a total order
  on the edges that is a linear extension of the partial order $\N$ on $L \times
  L$.

  Conversely, for every linear extension $\Lo$ of the partial order $\N$, we can
  find a function $TS$ that induces $\Lo$ on $L\times L$ and that satisfies \axioms.

\end{theorem}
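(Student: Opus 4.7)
My plan is to prove the two directions of the equivalence in turn. The forward direction asserts that any $TS$ satisfying \axioms\; induces a linear extension of $\N$. Given pairs $(u,v)$ and $(u',v')$ whose sorted common-event sequences are $a = (a_1,\ldots,a_A)$ and $b = (b_1,\ldots,b_B)$ with $a \geq_\N b$, I would show $TS(u,v) \geq TS(u',v')$. The cleanest route is via Theorem~\ref{thm:char}: writing $TS = g(h(a_1),\ldots,h(a_A))$ with $g$ monotone increasing and $h$ monotone decreasing, the relations $a_i \leq b_i$ give $h(a_i) \geq h(b_i)$, and appending the extra arguments $h(a_{B+1}),\ldots,h(a_A)$ only further increases $g$. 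An axiom-only alternative morphs a graph realizing $b$ into one realizing $a$ by repeatedly deleting attendees from oversized events (\ref{axiom:smaller_stronger}) and then appending the missing events (\ref{axiom:more_stronger}), each step weakly increasing $TS$.

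The converse direction is more delicate: for every linear extension $\Lo$ of $\N$ I need to exhibit an axiom-satisfying $TS_\Lo$ whose induced order is exactly $\Lo$. The plan is to work within the parametric family supplied by Theorem~\ref{thm:char} and pick $h$ and $g$ to match $\Lo$. Concretely, I would enumerate $\natnum^*$ compatibly with $\Lo$ and assign values inductively while respecting the Baseline normalization, the bounds $1 \geq h(n) \geq \reci{{n \choose 2}}$ from Lemma~\ref{lemma:ts_bound}, and monotonicity together with submodularity relative to earlier assignments. The induced total order on $L \times L$ then agrees with $\Lo$ by construction, since incomparable elements of $\N$ are assigned tie strengths in the order dictated by $\Lo$, while comparable ones are already forced into the right order by the forward direction.

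The main obstacle is \ref{axiom:indep}, which forces the combiner $g$ to be a single binary function and therefore to act associatively and commutatively on reachable values: the tie strength of a sequence must not depend on the order in which its events are appended. I would handle this by making the intended $TS$ injective on sorted sequences, so that any reachable pair $(TS(a), TS((k)))$ uniquely determines $(a,k)$, and then setting $g(TS(a), TS((k))) := TS(a \cup \{k\})$ pointwise. Verifying that this choice is simultaneously monotone, submodular, and consistent at every stage of the enumeration requires keeping sufficient slack in how tightly the inductive values hug the bounds from Lemma~\ref{lemma:ts_bound}; I expect the bulk of the technical work to be checking that enough slack is always available at each greedy step so that the next sequence in the $\Lo$-enumeration can be assigned a value compatible with all prior constraints.
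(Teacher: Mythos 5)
Your forward direction is essentially the paper's own proof: invoke Theorem~\ref{thm:char} and chain the monotonicity of $g$ and $h$ (the paper does exactly this; your axiom-only alternative via \ref{axiom:smaller_stronger} and \ref{axiom:more_stronger} is also fine). The genuine gap is in your converse, and it is not the bookkeeping issue you flag at the end --- it is that the step you identify as the main obstacle cannot be overcome at all. A fixed monotone $g$, as demanded by \ref{axiom:indep}, forces the induced order to be invariant under appending a common event: if $TS(a) > TS(b)$ then $g(TS(a),h(k)) \geq g(TS(b),h(k))$, i.e.\ $a \cup \{k\} \geq_\Lo b \cup \{k\}$ for every size $k$. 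But a linear extension of $\N$ need not have this property. Concretely, take $a=(3,3)$ and $b=(2,5)$, which are $\N$-incomparable, and append an event of size $4$: the sequences $(3,3,4)$ and $(2,4,5)$ are again $\N$-incomparable, and the only $\N$-relations among these four elements are $(3,3,4) \geq_\N (3,3)$ and $(2,4,5) \geq_\N (2,5)$. By Szpilrajn's extension theorem there is therefore a linear extension $\Lo$ of $\N$ with $(3,3) >_\Lo (2,5)$ but $(3,3,4) <_\Lo (2,4,5)$. Any $TS$ inducing this $\Lo$ and satisfying \ref{axiom:indep} would need $g(TS(3,3),h(4)) < g(TS(2,5),h(4))$ while $TS(3,3) > TS(2,5)$, so your pointwise-defined $g$ is necessarily non-monotone. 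No amount of ``slack'' in the inductive value assignment can help, because the obstruction is purely order-theoretic, not numerical: injectivity makes $g$ well defined, but nothing can make it monotone for such an $\Lo$.

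It is worth seeing how the paper avoids this trap: its converse construction never builds $g$ at all. It fixes $f(\phi)=0$, $f((n))=\reci{n-1}$, $f(1,\ldots,1)=n$, and then assigns values to the remaining sequences one at a time by averaging an already-defined lower neighbor and upper neighbor --- and, tellingly, it claims only that the resulting function satisfies Axioms (1--6), silently dropping \ref{axiom:indep} and \ref{axiom:submodularity} even though the theorem statement promises all of \axioms. In other words, the paper proves a weaker converse than the theorem asserts, and your attempt to handle \ref{axiom:indep} honestly exposes why: the full converse is false for linear extensions like the one above, since by the (correct) forward direction every axiom-satisfying $TS$ induces an extension with the appending-invariance property. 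The realistic repair for you is to prove what the paper actually proves: run your greedy enumeration, but interpolate values with respect to $\Lo$ (every element has a defined $\Lo$-lower bound, namely $\phi$, and a defined $\Lo$-upper bound among the all-ones sequences), and verify Axioms 1--6 only, which reduce to $\Lo$ extending $\N$ together with the fixed normalization on singleton sequences.
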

\begin{proof}
  $TS: L \times L \to \real$. Hence, it gives a total order on the set of pairs
  of user. We want to show that if $TS$ satisfies \axioms, then the total order
  is a linear extension of $\N$. The characterization in Theorem~\ref{thm:char}
  states that given a pair of vertices $(u,v) \in (L \times L)$, $TS(u,v)$ is
  characterized by the number of users in events common to $u$ and $v$ and can
  be expressed as $TS_G(u,v) = g(h(|P_i|))_{1\leq i \leq k}$ where $g$ is a
  monotone submodular function and $h$ is a monotone decreasing function. Since
  $TS : L\times L \to \real$, it induces a total order on all pairs of users. We
  now show that this is a consistent with the partial order $\N$. Consider two
  pairs $(u_1,v_1) , (u_2, v_2)$ with party profiles $a = (a_1,\ldots,a_A)$ and
  $b = (b_1, \ldots, b_B)$.

  Suppose $a \geq_\N b$. We want to show that $TS(u_1,v_1) \geq TS(u_2,v_2)$. $a
  \geq_\N b$ implies that $A \geq B$ and that $a_i \leq b_i : \forall 1 \leq i
  \leq B$.
  \begin{align*}
    &TS(u_1,v_1)  \\
    &= g(h(a_1), \ldots, h(a_A)) \\
    &\geq g(h(a_1), \ldots, h(a_B)) \text{ (Since $g$ is monotone and $A \geq B$) }\\
    &\geq g(h(b_1), \ldots, h(b_B)) \text{ (Since $g$ is monotone and } \\
    & \qquad h(a_i) \geq h(b_i) \text{ since } a_i \leq b_i ) \\
    &= TS(u_2,v_2)
  \end{align*}
  This proves the first part of the theorem.

  For the converse, we are given an total ordering $\Lo = (\natnum^*,\leq_\Lo)$
  that is an extension of the partial order $\N$. We want to prove that there
  exists a tie strength function $TS: L \times L \to \real$ that satisfies
  Axioms (1-6) and that induces $\Lo$ on $L \times L$. We shall prove this by
  constructing such a function. We shall define a function $f:\natnum^* \to
  \rational$ and define $TS_G(u,v) = f(a_1,\ldots,a_k)$, where $a_i = |P_i|$,
  the number of users that attend event $P_i$ in $G$.

  Define $f(n) = \reci{n-1}$ and $f(\phi) = 0$. Hence, $TS_\phi(u,v) = f(\phi) =
  0$ and $TS_{\{u,v\}}(u,v) = f(2) = \reci{2-1} = 1$. This shows that $TS$
  satisfies \ref{axiom:baseline}. Also, define $f(\underbrace{1, 1, \ldots,
    1}_{n}) = n$.  Since $\natnum^*$ is countable, consider elements in some
  order. If for the current element $a$ under consideration, there exists an
  element $b$ such that $a =_\N b$ and we have already defined $TS(b)$, then
  define $TS(a) = TS(b)$. Else, find let $a_{glb} = argmax_e\left\{TS(e) \text{
      is defined and } a \geq_\N e\right\}$ and let $a_{lub} = \argmin_e \left\{
    TS(e) \text{ is defined and } a \leq_\N e\right\}$. Since, at every point
  the sets over which we take the maximum of minimum are finite, both $a_{glb}$
  and $a_{lub}$ are well defined and exist. Define $TS(a) = \half \left(
    TS(a_{glb}) + TS(a_{lub})\right)$.
\end{proof}

In this abstract framework, an intuitively appealing linear extension is the
random linear extension of the partial order under consideration. There are
polynomial time algorithms to calculate
this~\citep*{random_linear_extension}. We leave the analysis of the analytical
properties and its viability as a strength function in real world applications
as an open research question.

In the next section, we turn our attention to actual measures of tie
strength. We see some popular measures that have been proposed before as well as
some new ones.

\section{Measures of tie strength}
\label{sec:measures}

\begin{table*}[ht]
  \centering
  \begin{tabularx}{\textwidth}{|l|c|c|c|c|c|c|c|c|X|}
    \hline
    \backslashbox{Measures of Tie Strength \qquad}{Axioms} &
    \begin{turn}{90} \ref{axiom:iso} \end{turn} &
    \begin{turn}{90} \ref{axiom:baseline} \end{turn} &
    \begin{turn}{90} \ref{axiom:more_stronger} \end{turn} &
    \begin{turn}{90} \ref{axiom:smaller_stronger} \end{turn} &
    \begin{turn}{90} \ref{axiom:larger_more} \end{turn} &
    \begin{turn}{90} \ref{axiom:depend} \end{turn} &
    \begin{turn}{90} \ref{axiom:indep} \end{turn} &
    \begin{turn}{90} \ref{axiom:submodularity} \end{turn} &
    $g(a_1,\ldots,a_k)$ and $h(|P_i|)=a_i$ \newline(From the characterization in \newline Theorem~\ref{thm:char})\\
    \hline \ref{measure:common} &\checkmark &\checkmark &\checkmark &\checkmark &\checkmark &\checkmark &\checkmark &\checkmark 
    & $g(a_1,\ldots,a_k) = \sum_{i=1}^ka_i \newline h(n) = 1$\\
    \hline
    \ref{measure:jaccard} &\checkmark &\checkmark &\checkmark &\checkmark &\checkmark &x &x &x &x \\
    \hline 
    \ref{measure:delta} &\checkmark &\checkmark &\checkmark &\checkmark &\checkmark &\checkmark &\checkmark &\checkmark 
    & $g(a_1,\ldots,a_k) = \sum_{i=1}^ka_i \newline h(n) = \reci{{n\choose 2}}$\\
    \hline 
    \ref{measure:adamic_adar} &\checkmark &\checkmark &\checkmark &\checkmark &\checkmark &\checkmark &\checkmark &\checkmark &
    $g(a_1,\ldots,a_k) = \sum_{i=1}^ka_i \newline h(n) = \reci{\log n}$\\
    \hline
    \ref{measure:katz} &\checkmark &x &\checkmark &\checkmark &\checkmark &\checkmark &x &x &x\\
    \hline 
    \ref{measure:pref} &\checkmark &\checkmark &x &\checkmark &\checkmark &\checkmark &x &x &x\\
    \hline
    \ref{measure:random_walk} &\checkmark &x &x &x &\checkmark &\checkmark &x &x &x\\
    \hline
    \ref{measure:simrank} &\checkmark &x &x &x &x &x &x &x &x\\
    \hline 
    \ref{measure:max} &\checkmark &\checkmark &\checkmark &\checkmark &\checkmark &\checkmark &\checkmark &\checkmark 
    & $g(a_1,\ldots,a_k) = \max_{i=1}^ka_i \newline h(n) = \reci{n}$\\
    \hline 
    \ref{measure:newman} &\checkmark &\checkmark &\checkmark &\checkmark &\checkmark &\checkmark &\checkmark &\checkmark 
    & $g(a_1,\ldots,a_k) = \sum_{i=1}^ka_i \newline h(n) = \reci{n}$ \\
    \hline
    \ref{measure:prop} &\checkmark &x &x &\checkmark &x &\checkmark &x &x & x\\
    \hline
  \end{tabularx}
  \caption{Measures of \ts and the axioms they satisfy}
  \label{tab:measures}
\end{table*}

There have been plenty of tie-strength measures discussed in previous literature. We review the most popular of them here and classify them according to the axioms they satisfy. In this section, for an event $P$, we denote by $|P|$ the number of people in the event $P$.  The size of $P$'s neighborhood is represented by $|\Gamma(P)|$.

\begin{description}
\item[Common Neighbors. \label{measure:common}] This is the simplest measure of
  tie strength, given by the total number of common events that both $u$ and $v$
  attended. $$TS(u, v) = |\Gamma(u)\cap \Gamma(v)| $$
  
\item[Jaccard Index. \label{measure:jaccard}] A more refined measure of tie
  strength is given by the Jaccard Index, which normalizes for how ``social'' $u$ and $v$ are
  $$ TS(u, v) = \frac{|\Gamma(u)\cap \Gamma(v)|} {|\Gamma(u)\cup \Gamma(v)|} $$
  
\item[Delta.\label{measure:delta}] Tie strength increases with the number of events.  $$TS(u, v) = \sum_{P \in \Gamma(u)\cap
    \Gamma(v)} \reci{{|P| \choose 2}}$$
    
\item[Adamic and Adar.\label{measure:adamic_adar}] This measure was introduced
  in~\citep*{adamic_score}.
 $$ TS(u, v) = \sum_{P \in \Gamma(u)\cap \Gamma(v)} \reci{\log|P|}$$
 
\item[Linear.\label{measure:newman}] Tie strength increases with number of events.
  $$TS(u, v) = \sum_{P \in \Gamma(u)\cap \Gamma(v)} \reci{|P|}$$ 
   
\item[Preferential attachment.\label{measure:pref}] $$ TS(u,v) =
  |\Gamma(u)|\cdot |\Gamma(v)|$$
  
\item[Katz Measure.\label{measure:katz}] This was introduced in \citep*{katz}. It
  counts the number of paths between $u$ and $v$, where each path is discounted
  exponentially by the length of path.
  $$ TS(u,v) = \sum_{q \in \text{ path between } u,v} \gamma^{-|q|}$$
  
\item[Random Walk with Restarts.\label{measure:random_walk}] This gives a
  non-symmetric measure of tie strength. For a node $u$, we jump with probability
  $\alpha$ to node $u$ and with probability $1-\alpha$ to a neighbor of the
  current node. $\alpha$ is the restart probability.  The \ts between $u$ and $v$ is the stationary probability that
  we end at node $v$ under this process.
  
\item[Simrank.\label{measure:simrank}] This captures the similarity between two
  nodes $u$ and $v$ by recursively computing the similarity of their neighbors.
  \[ TS(u,v) = \begin{cases}
    1 & \text{ if  } u = v \\
    \gamma \cdot \frac{ \sum_{a \in \Gamma(u)}\sum_{b \in \Gamma(v)} TS(a,b)}
    {|\Gamma(u)|
      \cdot |\Gamma(v)|} & \text{ otherwise}\\
    \end{cases}
    \]
\end{description}

Now, we shall introduce three new measures of tie strength. 
 In a sense, $g = \sum$ is at one extreme
of the range of functions allowed by Theorem~\ref{thm:char} and that is the
default function used. $g = \max$ is at the other extreme of the range of
functions. 

\begin{description}
\item[Max.\label{measure:max}] Tie strength does not increases with number of events
  $$TS(u, v) = \max_{P \in \Gamma(u)\cap \Gamma(v)} \reci{|P|} $$. 
  
\item[Proportional.\label{measure:prop}] Tie strength increases with number of events. People spend time
  proportional to their TS in a party. 
  TS is the fixed point of this set of equations:
  
  $$TS(u,v)=\sum_{P \in \Gamma(u)\cap \Gamma(v)} { \frac{\epsilon}{|P|} +
    (1-\epsilon) \frac{TS(u,v)}{ \sum_{w \in \Gamma(u)} TS(u,w)} }$$
  
\item[Temporal Proportional.\label{measure:temporal_prop}] This is similar to Proportional,
  but with a temporal aspect.  $TS$ is not a fixed point, but starts with a
  default value and is changed according to the following equation, where the events are
  ordered by time.
  \begin{align*} &TS(u,v,t) \\
  &= \begin{cases}
    TS(u,v,t-1) \quad \text{ if $u$ and $v$ do not attend }P_t\\
    \epsilon\frac{1}{|P_t|} + (1-\epsilon) \frac{TS(u,v,t-1)} {\sum_{w \in P_t}
     TS(u,w,t-1)} \quad \text{ otherwise }
  \end{cases}\\
  \end{align*}
\end{description}

Table~\ref{tab:measures} provides a classification of all these tie-strength measures, according to which axioms they satisfy. If they satisfy all the axioms, then we use Theorem~\ref{thm:char} to find the characterizing functions $g$ and $h$.  An interesting observation is that all the ``self-referential'' measures (such as Katz Measure, Random Walk with Restart, Simrank, and Proportional) fail to satisfy the axioms.  Another interesting observation is in the classification of measures that satisfy the axioms.  The majority use $g=\sum$ to aggregate tie strength across events.  Per event, the majority compute tie strength as one over a simple function of the size of the party.

\begin{figure*}
  \centering
  \begin{tabular}{ccc}
    \includegraphics[scale=0.26]{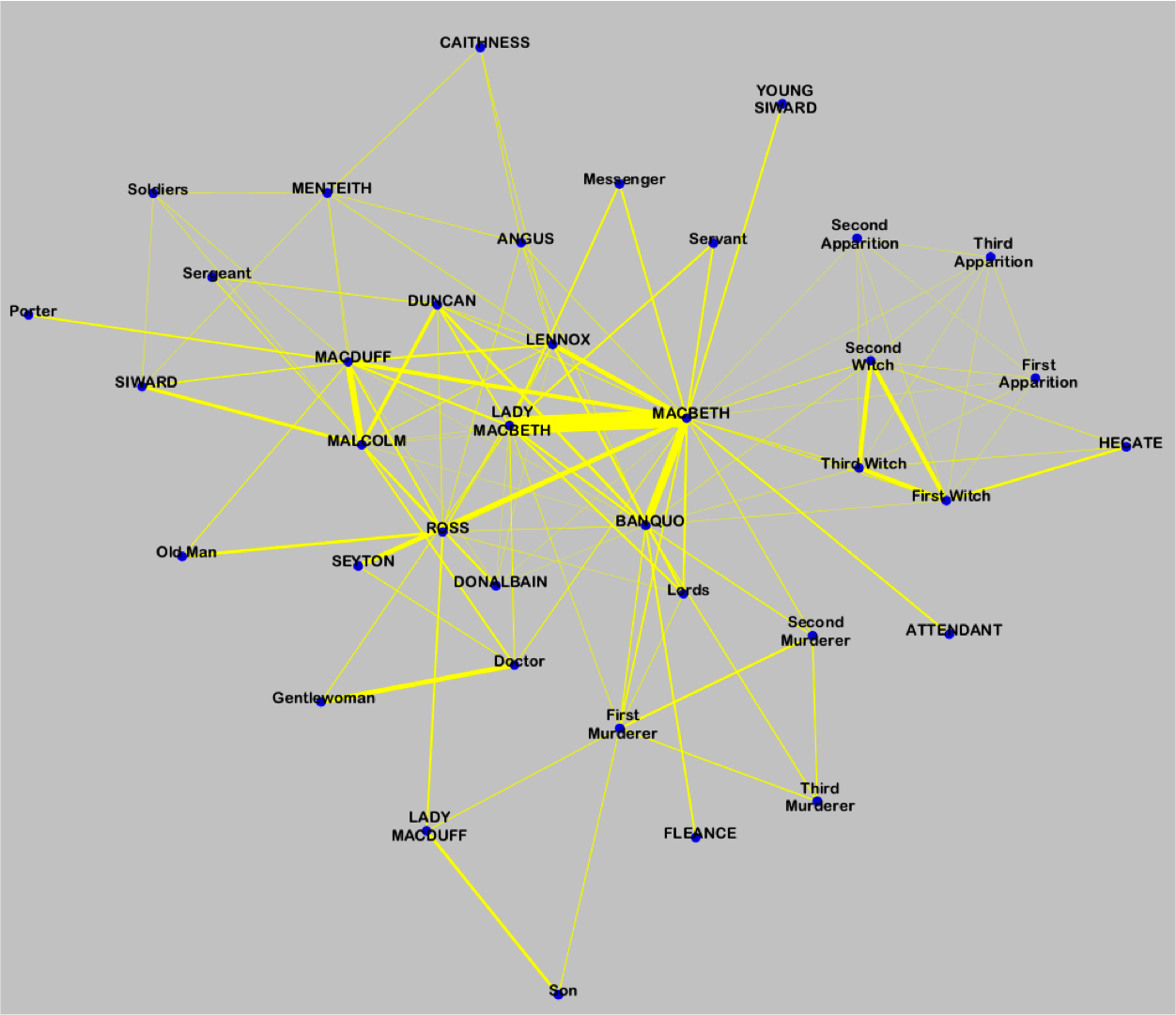} & 
    \includegraphics[scale=0.26]{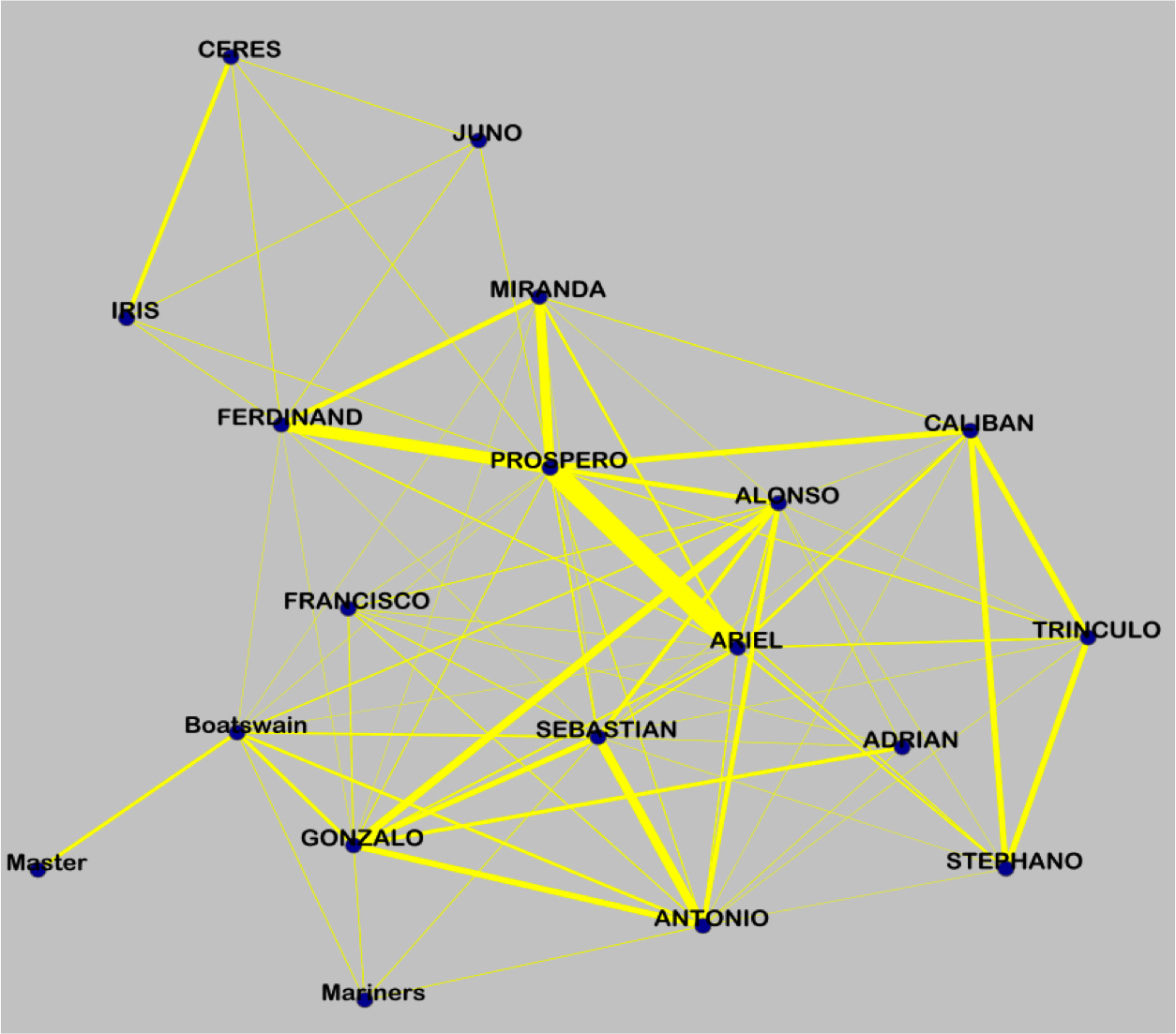} & 
    \includegraphics[scale=0.26]{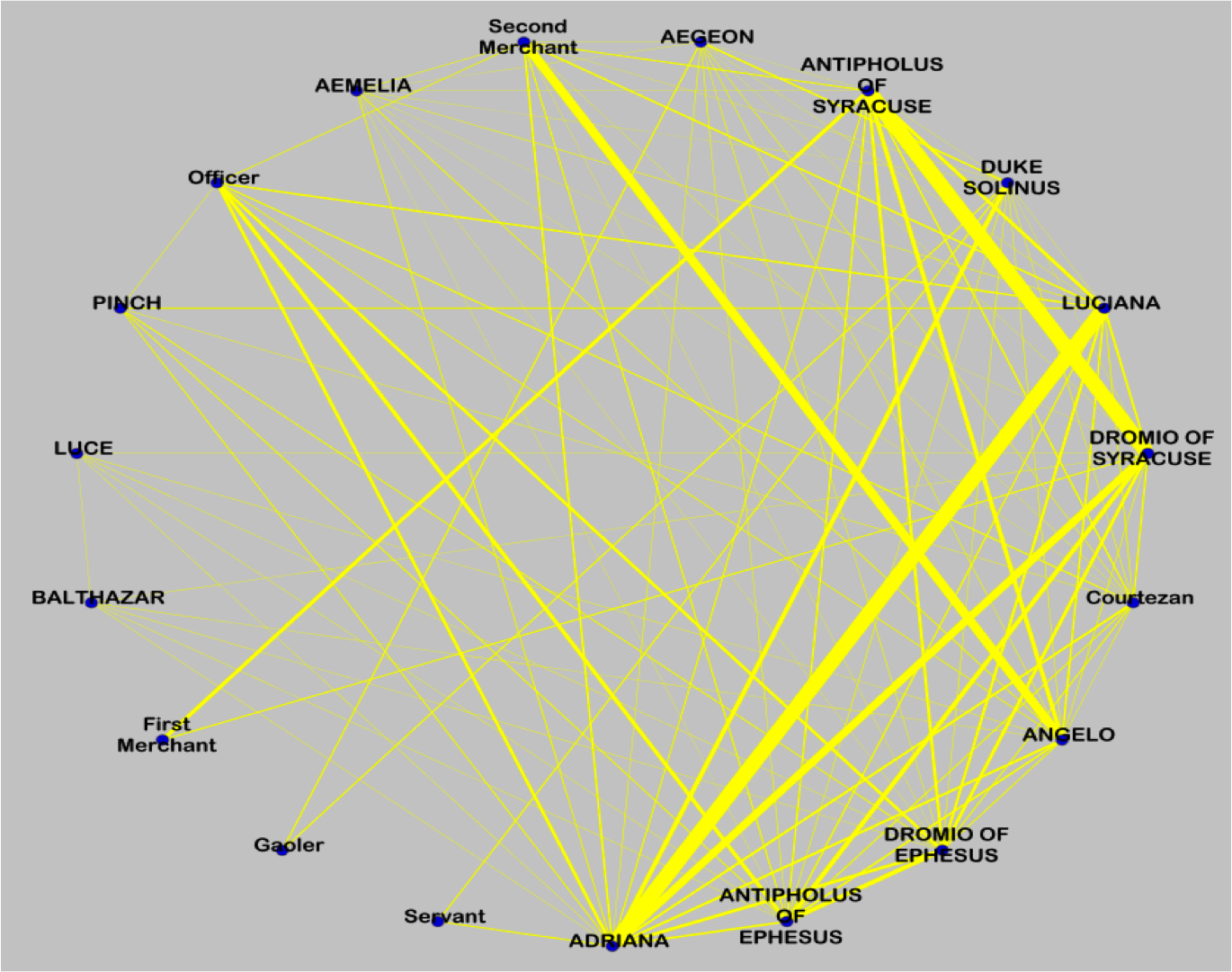} \\ 
    {\bf Macbeth} &  {\bf Tempest} & {\bf A Comedy of Errors}  \\
  \end{tabular}
  \caption{Inferred weighted social networks between characters in Shakespearean plays.  The thicker an
    edge, the stronger the tie. Tie Strength was calculated using the tie-strength measure~\ref{measure:newman}}
  \label{fig:macbeth}
\end{figure*}

\section{Experiments}
\label{sec:experiments}
This section presents our findings on five data sets: Shakespearean plays
(Macbeth, Tempest, and A Comedy of Errors), Reality Mining, and Enron Emails.

\subsection{Data Sets}
\paragraph{Shakespearean Plays}
We take three well-known plays by Shakespeare (Macbeth, Tempest, and A Comedy of
Errors) and create bipartite person$\times$event graphs. The person-nodes are the
characters in the play. Each event is a set of characters who are on the stage
at the same time. We calculate the strength of ties between each pair of
nodes. Thus without using any semantic information and even without analyzing
any dialogue, we estimate how much characters interact with one another.

\paragraph{The Reality Mining Project}
This is the popular dataset from the Reality Mining project at
MIT~\citep*{reality_mining}. This study gave one hundred smart phones to
participants and logged information generated by these smart phones for several
months. We use the bluetooth proximity data generated as part of
this project. The bluetooth radio was switched on every five minutes and logged
other bluetooth devices in close proximity. The people are the participants in
the study and events record the proximity between people. This gives us a
total of 326,248 events.  

\paragraph{Enron Emails}
This dataset consists of emails from 150 users from the Enron corporation, that were
made public during the Federal Energy Regulatory Commission investigation. We
look at all emails that occur between Enron addresses. Each email is an event
and all the people copied on that email i.e. the sender (from), the receivers (to, cc
and bcc) are included in that event. This gives a total of 32,471 people and
371,321 events. 

\subsection{Measuring Coverage of the Axioms}
In Section~\ref{sec:axioms}, we discussed axioms governing tie-strength and
characterized the axioms in terms of a partial order in
Theorem~\ref{thm:ts_partial_order}. We shall now look at an experiment to
determine the ``coverage'' of the axioms, in terms of the number of pairs of
ties that are actually ordered by the partial order. 

For different datasets, we use Theorem~\ref{thm:ts_partial_order} to generate a
partial order between all ties. Table~\ref{table:partial_order} shows the
percentage of all ties that are \emph{not} resolved by the partial order --  i.e., \emph{the partial
order cannot tells us if one tie is greater or if they are equal}. Each measure of
tie-strength gives a total order on the ties; and, hence resolves all the
comparisons between pairs of ties. The number of tie-pairs which are left
incomparable in the partial order gives a notion of the how much room the axioms
leave open for different tie-strength functions to differ from each
other. Table~\ref{table:partial_order} shows that partial order \emph{does}
resolve a very high percentage of the ties. Also, we see that real-world
datasets (e.g., Reality Mining) have more unresolved ties than the cleaner Shakespearean plays
datasets.

\begin{table}[ht]
\begin{tabular}{l c r}
\hline
Dataset & Tie Pairs & Incomparable Pairs (\%) \\
\hline
Tempest & 14,535 & 275 (1.89) \\
Comedy of Errors & 14,535 & 726  (4.99) \\
Macbeth & 246,753 & 584 (0.23) \\
Reality Mining & 13,794,378 & 1,764,546 (12.79) \\
\hline
\end{tabular}
\caption{Number of ties \emph{not} resolved by the partial order.  The last column shows the percentage of tie pairs on which different tie-strength functions can differ.}
\label{table:partial_order}
\end{table}

Next, we look at two tie-strength functions (Jaccard Index and
Temporal Proportional) which do not obey the
axioms. As previously shown, Theorem~\ref{thm:ts_partial_order}  implies that these functions do
not obey the partial order.  So, there are some tie-pairs in conflict with the
partial order. Table~\ref{table:partial_order_conflict} shows the number of tie-pairs that are actually in conflict. This experiment 
gives us some intuition about how far away a measure is from the axioms. We see
that for these datasets, Temporal Proportional agrees with
the partial order more than the Jaccard Index.  
 We can also see that as the size of the dataset increases, the percentage of
conflicts decreases drastically.

\begin{table}[ht]
\centering
\begin{tabular}{l c r r}
\hline
Dataset & Tie Pairs & Jaccard (\%) & Temporal(\%) \\
\hline
Tempest & 14,535 & 488 (3.35) & 261 (1.79)\\
Comedy & 14,535  & 1,114 (7.76)& 381 (2.62)\\
Macbeth & 246,753 & 2,638 (1.06)& 978 (0.39) \\
Reality & 13,794,378 & 290,934 (0.02) & 112,546 (0.01) \\
\hline
\end{tabular}
\caption{Number of conflicts between the partial order and the tie-strength functions: Jaccard Index and Temporal Proportional.  The second and third columns show the percentage of tie-pairs in conflict with the partial order.}
\label{table:partial_order_conflict}
\end{table}

\subsection{Visualizing Networks}

We obtain the tie strength between characters from Shakespearean plays using the
linear function proposed by~\ref{measure:newman} Figure~\ref{fig:macbeth} shows
the inferred weighted social networks. Note that the inference is only based on people
occupying the same stage and not on any semantic analysis of the text. The
inferred weights (i.e. tie strengths) are consistent with the stories. For example, the highest
tie strengths are between Macbeth and Lady Macbeth in the play Macbeth, between
Ariel and Prospero in Tempest, and between Dromio of Syracuse and Antipholus of
Syracuse in A Comedy of Errors.

\subsection{Measuring Correlation among Tie-Strength Functions}

Figures~\ref{fig:degree_distribution} and~\ref{fig:shakes_degree_distribution} show the frequency distributions of the number of people at an event. We see that these distributions are very different for the different graphs (even among the real-world communication networks, Enron and MIT Reality Mining). This suggests that different applications
might need different measures of tie strength.

\begin{figure}
  \centering
  \includegraphics[width=0.43\textwidth]{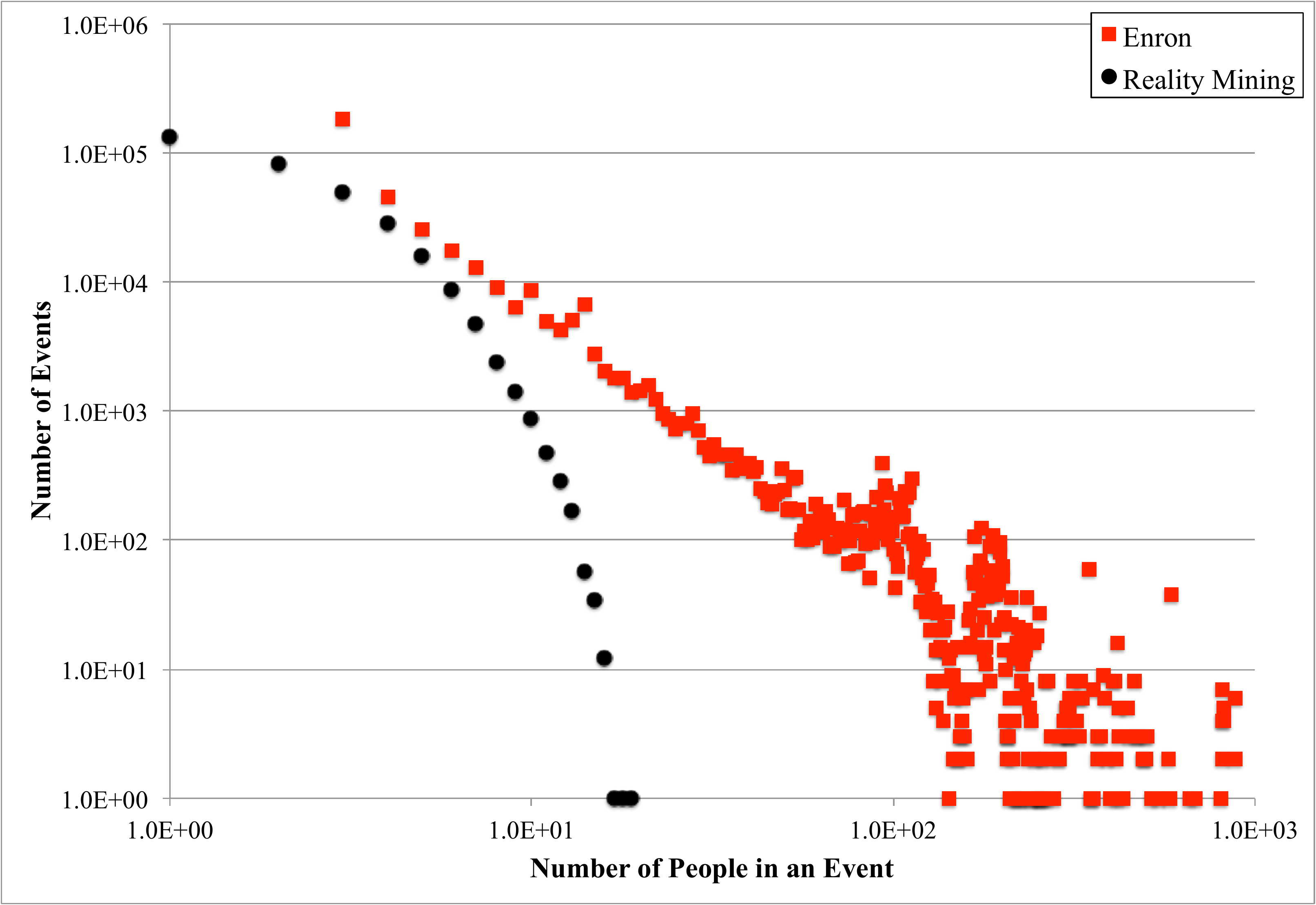}
  \caption{Frequency distribution of number of people per event for the Reality
    Mining and Enron datasets (in log-log scale)}
  \label{fig:degree_distribution}
\end{figure}

\begin{figure}
  \centering
  \includegraphics[width=0.43\textwidth]{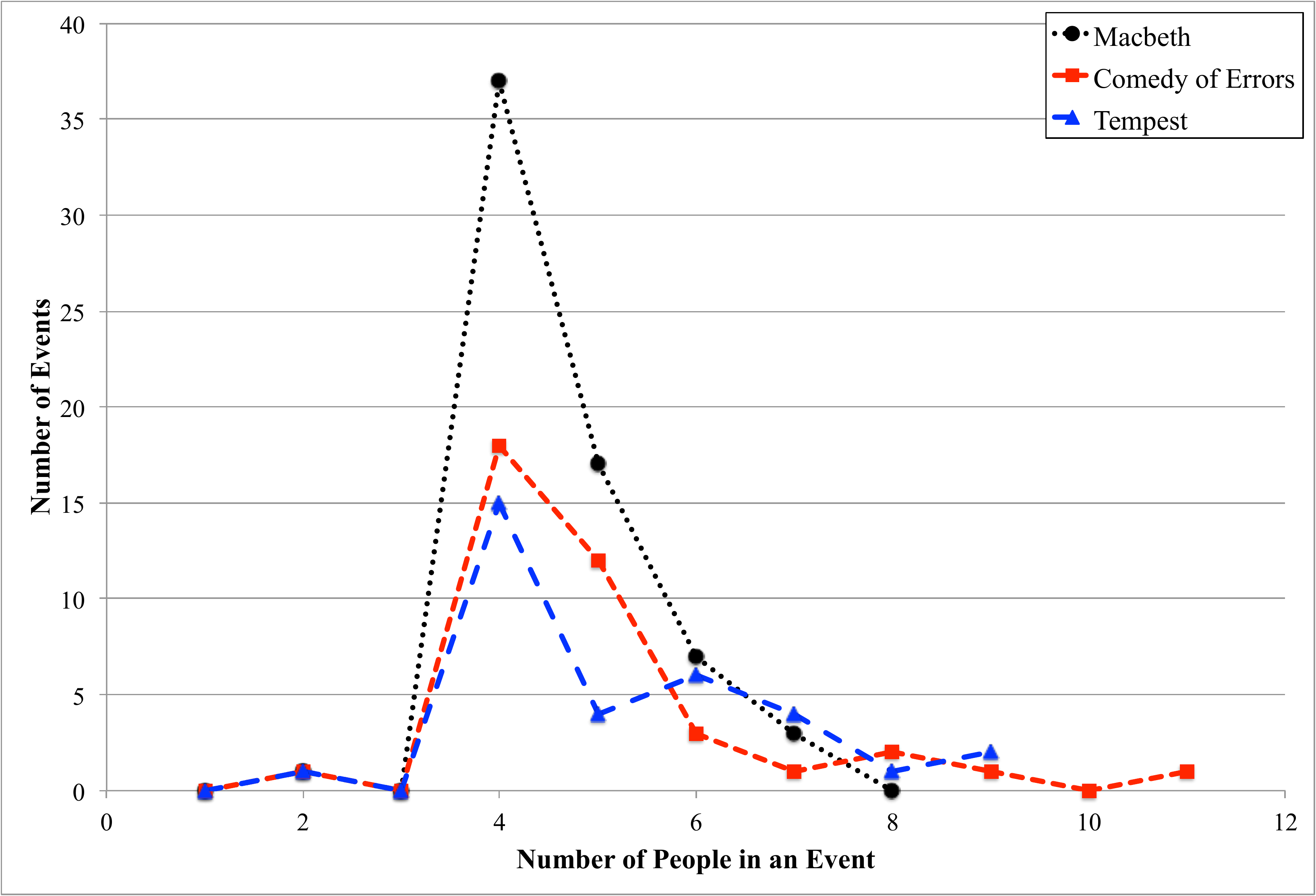}
  \caption{Frequency distribution of number of people per event for the Shakespearean Plays}
  \label{fig:shakes_degree_distribution}
\end{figure}

Figure 4 shows Kendall's $\tau$ coefficient for the Shakespearean plays, the
Reality Mining data and Enron emails.  Depending on the data set, different
measures of tie strength are correlated.  For instance, in the ``clean'' world
of Shakespearean plays Common Neighbor is the least correlated measure; while in the
``messy'' real world data from Reality Mining and Enron emails, Max is the least
correlated measure. 

\begin{figure*}[tbh]
  \centering
  \includegraphics[scale=1]{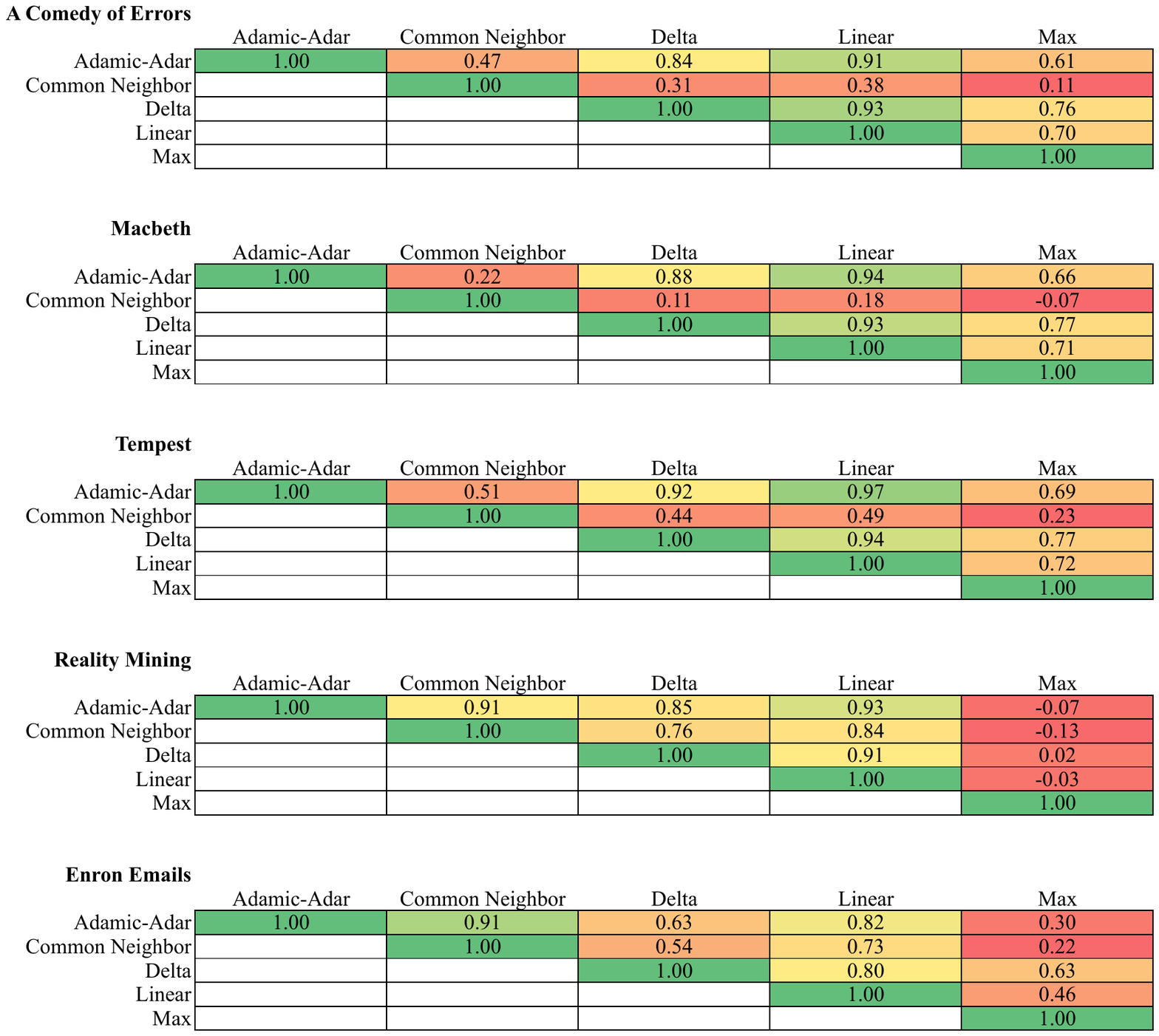}
  \label{fig:kendall_tau}
  \caption{Kendall's $\tau$ coefficient for Shakespearean plays, the Reality
    Mining data and the Enron emails.  The color scale goes from bright green (coefficient = 1) to bright red (coefficient = -1).  In the Skakespearean plays, the least correlated measure is Common Neighbor (as indicated by the red cells in that column).  In the real-world communication networks of Enron and Reality Mining, the least correlated measure is Max (again as indicated by the red cells in that column).  Since the correlation matrices are symmetric, we show only the upper-triangle entries.}
\end{figure*}

\section{Conclusions}
\label{sec:conclusions}

We presented an axiomatic approach to the problem of inferring implicit social
networks by measuring tie strength from bipartite person$\times$event graphs.  We
characterized functions that satisfy all axioms and demonstrated a range of measures
that satisfy this characterization.  We showed that in ranking applications, the
axioms are equivalent to a natural partial order; and demonstrated that to
settle on a particular measure, we must make a non-obvious decision about
extending this partial order to a total order which is best left to the
particular application.  We classified measures found in prior literature
according to the axioms that they satisfy.  Finally, our experiments demonstrated the coverage of the axioms and revealed through
the use of Kendall's Tau correlation whether a dataset is well-behaved, where we
do not have to worry about which tie-strength measure to choose, or we have to
be careful about the exact choice of measure.

\bibliographystyle{abbrvnat}
\bibliography{connected}  

\begin{thebibliography}{13}
\providecommand{\natexlab}[1]{#1}
\providecommand{\url}[1]{\texttt{#1}}
\expandafter\ifx\csname urlstyle\endcsname\relax
  \providecommand{\doi}[1]{doi: #1}\else
  \providecommand{\doi}{doi: \begingroup \urlstyle{rm}\Url}\fi

\bibitem[Adamic and Adar(2003)]{adamic_score}
L.~A. Adamic and E.~Adar.
\newblock Friends and neighbors on the web.
\newblock \emph{Social Networks}, 25\penalty0 (3):\penalty0 211--230, 2003.

\bibitem[Allali et~al.(2011)Allali, Magnien, and Latapy]{bipartite_internal}
O.~Allali, C.~Magnien, and M.~Latapy.
\newblock Link prediction in bipartite graphs using internal links and weighted
  projection.
\newblock In \emph{Proceedings of the 3rd International Workshop on Network
  Science for Communication Networks}, NetSciCom, 2011.

\bibitem[Altman and Tennenholtz(2005)]{pagerank_axioms}
A.~Altman and M.~Tennenholtz.
\newblock Ranking systems: the pagerank axioms.
\newblock In \emph{Proceedings of the 6th ACM conference on Electronic
  Commerce}, EC, pages 1--8, 2005.

\bibitem[Eagle et~al.(2009)Eagle, Pentland, and Lazer]{reality_mining}
N.~Eagle, A.~Pentland, and D.~Lazer.
\newblock Inferring social network structure using mobile phone data.
\newblock \emph{Proceedings of the National Academy of Sciences}, 106\penalty0
  (36):\penalty0 15274--15278, 2009.

\bibitem[Gilbert and Karahalios(2009)]{predictingtiestrength}
E.~Gilbert and K.~Karahalios.
\newblock Predicting tie strength with social media.
\newblock In \emph{Proceedings of the 27th international conference on Human
  factors in computing systems}, CHI, pages 211--220, 2009.

\bibitem[Granovetter(1973)]{gran}
M.~Granovetter.
\newblock The strength of weak ties.
\newblock \emph{American Journal of Sociology}, 78\penalty0 (6):\penalty0
  1360--1380, 1973.

\bibitem[Jeh and Widom(2002)]{simrank}
G.~Jeh and J.~Widom.
\newblock Simrank: A measure of structural-context similarity.
\newblock In \emph{Proceedings of the International Conference on Knowledge
  Discovery and Data Mining}, KDD, pages 538--543, 2002.

\bibitem[Kahanda and Neville(2009)]{transactional_tie_strength}
I.~Kahanda and J.~Neville.
\newblock Using transactional information to predict link strength in online
  social networks.
\newblock In \emph{Proceedings of the 3rd Conference on Weblogs and Social
  Media}, ICWSM, 2009.

\bibitem[Karzanov and Khachiyan(1991)]{random_linear_extension}
A.~Karzanov and L.~Khachiyan.
\newblock On the conductance of order markov chains.
\newblock \emph{Order}, 8\penalty0 (1):\penalty0 7--15, 1991.

\bibitem[Katz(1953)]{katz}
L.~Katz.
\newblock A new status index derived from sociometric analysis.
\newblock \emph{Psychometrika}, 18\penalty0 (1):\penalty0 39--43, 1953.

\bibitem[Liben-Nowell and Kleinberg(2003)]{link_prediction}
D.~Liben-Nowell and J.~Kleinberg.
\newblock The link prediction problem for social networks.
\newblock In \emph{Proceedings of the 12th International Conference on
  Information and Knowledge Management}, CIKM, pages 556--559, 2003.

\bibitem[Lin(1998)]{lin_info-theory-def-sim}
D.~Lin.
\newblock An information-theoretic definition of similarity.
\newblock In \emph{Proceedings of the Fifteenth International Conference on
  Machine Learning}, ICML, pages 296--304, 1998.

\bibitem[Roth et~al.(2010)Roth, Ben-David, Deutscher, Flysher, Horn,
  Leichtberg, Leiser, Matias, and Merom]{google_dont_forget_bob}
M.~Roth, A.~Ben-David, D.~Deutscher, G.~Flysher, I.~Horn, A.~Leichtberg,
  N.~Leiser, Y.~Matias, and R.~Merom.
\newblock Suggesting friends using the implicit social graph.
\newblock In \emph{Proceedings of the 16th Conference on Knowledge Discovery
  and Data Mining}, KDD '10, pages 233--242, 2010.

\end{thebibliography}

\end{document}